\documentclass[superscriptaddress]{revtex4-2}
\usepackage{braket}
\usepackage{graphicx}
\usepackage{eufrak}
\usepackage{amsmath,mathtools}
\usepackage{dsfont}
\usepackage[font={small}]{caption}
\usepackage{subcaption}
\usepackage{amssymb,epsfig,color,xspace,enumitem,listings}
\graphicspath{ {./Image/} }
\usepackage{overpic}
\usepackage{physics}
\usepackage{amsthm}
\usepackage{thmtools}
\usepackage{thm-restate}
\usepackage{pstricks}
\usepackage{float}
\usepackage{xcolor}

\usepackage{microtype} % Improves TeX's choices of linebreaks, text justification, end-of-line hyphenation, kerning etc.

\usepackage{hyperref} % Package compatibility note: hyperref should be included after most other packages unless otherwise specified.
\usepackage[capitalize]{cleveref} % To be included after hyperref.
\usepackage{xparse}

\hypersetup{
    colorlinks = true,
    linkcolor = blue,
    citecolor = blue,
    urlcolor = blue
}

\renewcommand{\H}{\mathcal{H}}
\newcommand{\K}{\mathcal{K}}
\newcommand{\B}{\mathcal{B}} % Corresponds to the set of bounded operators
\newcommand{\Pos}{\textrm{Pos}} % Set of Positive operators
\newcommand{\Hin}{\H_{\mathrm{in}}}      % input Hilbert space
\newcommand{\Kout}{\K_{\mathrm{out}}}      % output Hilbert space
\newcommand{\Phikeep}{\Phi_{\mathrm{keep}}} % Channel going only to the kept detectors

\newtheorem{lemma}{Lemma}[section]

\newtheorem{theorem}[lemma]{Theorem}
\newtheorem{corollary}[lemma]{Corollary}
\newtheorem{remark}{Remark}
\theoremstyle{definition}
\newtheorem{definition}[lemma]{Definition}
\theoremstyle{plain}

\begin{document}

\title{Enforcing IID structure on time-bin encoded QKD protocols via coarse-graining}

\author{Shlok Nahar}
\thanks{These authors contributed equally to this work.}
\affiliation{Institute for Quantum Computing and Department of Physics and Astronomy, University of Waterloo}

\author{Shihong Pan}
\thanks{These authors contributed equally to this work.}
\affiliation{
    Institute for Quantum Computing and Cheriton School of Computer Science, University of Waterloo
}
\altaffiliation{
    Current address: New York University Shanghai, 567 West Yangsi Road, Shanghai, 200126, China
}

\author{Norbert L\"utkenhaus}
\affiliation{Institute for Quantum Computing and Department of Physics and Astronomy, University of Waterloo}

\date{\today}

\begin{abstract}
    Many security proofs for quantum key distribution (QKD) require Bob's measurement to have a tensor-product structure across protocol rounds, with some techniques requiring the stronger independent-and-identically-distributed (IID) condition. Time-bin encoded protocols often rely on interferometers whose detector outcomes depend on the interference between optical modes from neighbouring rounds, obstructing the direct application of such proofs. We show that classical post-processing of Bob's measurement data --- specifically, discarding the outcomes of detectors sensitive to inter-round coherence --- is sufficient to recover a product measurement positive operator-valued measure (POVM) (which is IID when the same single-round setup is used in every round). Applied to the Mach-Zehnder interferometer and the IID variant of the COW detection setup, this removes the need for the additional vacuum pulse introduced in prior analyses to establish tensor product structure of the measurement POVM, recovering better key rates without placing any restriction on Eve's attack.
\end{abstract}

\maketitle

\section{Introduction}
\label{sec:intro}

Quantum key distribution (QKD) is a method for realising quantum-safe cryptography \cite{mosca2018cybersecurity}. QKD security proofs \cite{koashi_simple_2005,koashi_simple_2009,tomamichel_uncertainty_2011,christandl_post-selection_2009,nahar_postselection_2024,arqand_marginal-constrained_2025,dupuis_entropy_2020,metger_security_2023, tupkary2025qkd} promise future-proof security by bounding the information accessible to an adversary. These proofs, however, require the protocol implementation to satisfy a number of structural properties. In particular, they typically assume that the measurements made in the protocol are independent across rounds \footnote{Throughout this paper, ``product'' or ``independent'' means that the multi-round POVM elements factor as tensor products over rounds. We reserve ``IID'' for the stronger case in which the \emph{same} single-round POVM is used in every round. The postselection technique \cite{christandl_post-selection_2009,nahar_postselection_2024} requires this stronger identical-round assumption.}, and only limited proofs \cite{sandfuchs2025security} exist for protocols \cite{inoue_differential_2002,stucki_fast_2005} that do not satisfy this structure.

Time-bin encoded QKD protocols often use interferometers that measure coherence between adjacent time-bins, which introduces correlations across protocol rounds. This violates the round-wise product-measurement assumption required by many standard security proofs. In this work, we prove that mathematically coarse-graining the measurement data --- specifically over the detector events sensitive to inter-round coherence --- renders the resulting effective POVM a tensor product over rounds. In effect, the coarse-graining removes the measurement outcomes that couple neighbouring rounds while retaining those needed to characterise each round individually. When the same single-round setup is used in every round, this tensor product is IID. As a result, standard security analyses can be applied to these time-bin encoded protocols without hardware modifications or additional vacuum pulses.

Prior works \cite{lavie_improved_2022, sulimany_highcow_2025} address the non-independence of the measurement by sending an additional vacuum pulse after every protocol round, which keeps the protocol in the standard round-wise setting. This countermeasure reduces the key produced per unit time by a factor of $2/3$ \footnote{This fraction is $d/(d+1)$, where $d$ is the number of pulses sent per protocol round. The improvement therefore shrinks as the number of pulses per round grows.}. Moreover, the argument using the additional vacuum pulse assumes that Eve forwards the vacuum pulse to Bob. This is a restriction on Eve's attack. Thus, our work improves on prior work on this point, both in performance and in rigour.

The remainder of the paper is organised as follows. In \Cref{subsec:notation}, we introduce notation for threshold-detection setups indexed by bit strings and define coarse-graining in \cref{subsec:coarse-graining}. In \cref{subsec:structural-conditions}, we then view the entire detection setup in two steps: the detector channel from Bob's input space to the detector-output space, and the detector POVM acting on this output space. This breakup lets us state simple conditions individually on the detector channel and output POVM, respectively, that give the resultant coarse-grained product POVM (\cref{thm:generic-product-povm}). In \cref{subsec:verifying-structural-conditions}, we describe how these conditions can be readily verified for practical linear optical detection setups. Finally, in \Cref{sec:Applications}, we apply these results to the Mach-Zehnder interferometer and to the three-state detection setup, and briefly discuss implications for the QKD key rate analysis.

\section{Formalism}
\label{sec:Formalism}

We proceed in increasing levels of concreteness. We begin with a general detection setup, introduce coarse-graining, and identify two structural conditions that together imply a product POVM. We then show how these conditions are readily verified in commonly used optical detection setups.

\subsection{Notation}
\label{subsec:notation}

We denote the set of bounded linear operators acting on Hilbert space $\mathcal{X}$ as $\B(\mathcal{X})$ and the set of positive semidefinite operators as $\Pos(\mathcal{X})$. We denote the set of the first $l$ natural numbers $[l] \coloneqq \{1, \ldots, l\}$, where $l$ is a positive integer.

We consider detection setups with $t$ distinct detectors with binary \footnote{Strictly speaking, our results do not require the detectors to have binary outcomes; it suffices that each detector have a finite number of outcomes. In that more general setting, one would work with dit strings rather than bit strings. For pedagogical clarity, however, we restrict our attention to binary-outcome detectors and therefore use bit strings throughout.} (click/no-click) outcomes. A measurement outcome on the $t$ detectors is described by a bit string $\vec{x} \in \{0,1\}^t$: the $k^\text{th}$ bit $x_k = 1$ indicates a click in detector $k$, and $x_k = 0$ indicates no click. Note that in this work, we often use the term detector to refer to a detector instance in the measurement setup. Distinct detector indices need not correspond to distinct physical detector devices; for example, repeated uses of the same physical detector at different times are treated as distinct detectors.

We will describe these detection setups in terms of input and output Hilbert spaces, analogous to the input and output modes of an optical detection setup. We further consider the input Hilbert space to be partitioned into $n$ rounds:
\begin{equation}
    \Hin = \bigotimes_{j=1}^n \H_j,
\end{equation}
where $\H_j$ is the Hilbert space corresponding to round $j$. The output Hilbert space $\Kout$ can be described in terms of spaces $\K_i$ corresponding to the $i^\text{th}$ detector:
\begin{equation}
    \Kout = \bigotimes_{i=1}^t \K_i.
\end{equation}
In general, given a set $X\subset [t]$, we will describe the output Hilbert space corresponding to the detectors in the set $X$ as $\K_X \coloneqq \bigotimes_{i\in X} \K_i$.

Thus, we will describe the detection setup POVM $\left\{\Gamma_{\vec{x}}\right\}_{\vec{x}\in \{0,1\}^t}\subset \Pos(\Hin)$ in terms of two objects:
\begin{enumerate}
    \item The detector channel $\Phi$ from the input Hilbert space to the output Hilbert space:
    \begin{equation}
        \Phi : \B(\Hin) \xrightarrow[]{} \B(\Kout).
    \end{equation}
    
    \item The POVM $\left\{N_{\vec{x}}\right\}_{\vec{x}\in \{0,1\}^t}\subset \Pos(\Kout)$ acting on the output space.
\end{enumerate}
Thus, we have
\begin{equation} \label{eq:fineInputOutputChannelPOVM}
    \Gamma_{\vec{x}} = \Phi^\dag(N_{\vec{x}}) \quad \forall\ \vec{x}\in\{0,1\}^t.
\end{equation}

In general, we will use Greek letters ($\Gamma$, $\chi$, $\mu$) to denote POVM elements acting on the input Hilbert space and Latin letters ($M$, $N$, $S$) to denote POVM elements acting on the output Hilbert space.

% For coherent-state inputs we adopt the following conventions. We write a coherent state on several modes compactly as $\ketbra{\vec{\alpha}} \coloneqq \ketbra{\alpha_1} \otimes \ketbra{\alpha_2} \otimes \cdots$, where $\vec\alpha$ collects the amplitudes of those modes. An $n$-round product coherent state is written $\bigotimes_{j=1}^n \ketbra{\vec{\alpha}_j}$, where $\vec\alpha_j$ is the tuple of amplitudes of the input modes in round $j$.

\subsection{Coarse-graining}
\label{subsec:coarse-graining}

\textbf{Coarse-graining} over a subset $D \subset [t]$ is in our case defined as the process of discarding (ignoring) the measurement outcomes of the detectors in $D$.

As a concrete example, consider detectors indexed by $[3] = \{1,2,3\}$. Coarse-graining over detector $2$, i.e.\ taking $D = \{2\}$, produces a coarse-grained bit string $\vec{y} = (y_1, y_2)$, where $y_1$ corresponds to detector $1$ and $y_2$ to detector $3$. The corresponding coarse-grained POVM element is
\begin{equation}
    \chi_{\vec{y}}
    =
    \Gamma_{(y_1,0,y_2)} + \Gamma_{(y_1,1,y_2)}.
\end{equation}

In general, let $K = \{r_1, r_2, \ldots, r_k\} \subseteq [t]$ be the sorted set of \emph{kept} detector indices, where $k = \abs{K} =  t - \abs{D}$. For a coarse-grained outcome string $\vec{y} \in \{0,1\}^k$, the POVM element $\chi_{\vec{y}}$ is the sum over all fine-grained strings $\vec{x} \in \{0,1\}^t$ that agree with $\vec{y}$ on the kept detectors:
\begin{equation}
\label{eq:chiyDef}
    \chi_{\vec{y}}
    =
    \sum_{\substack{
        \vec{x} \in \{0,1\}^t \\
        x_{r_j} = y_j\ \forall\, j \in [k]
    }}
    \Gamma_{\vec{x}}.
\end{equation}

In this paper, we will assign each of the kept detectors to a round \footnote{If the number of detectors are less than the number of rounds, we would formally require some of these sets $K_j$ to be empty sets. However, for any practically relevant case we would expect that the number of detectors is at least as large as the number of rounds and so we do not comment on this edge case   beyond this footnote.}: $K = \bigcup_{j=1}^n K_j$, where each $K_j$ is the set of detector indices corresponding to round $j$ \footnote{As each detector is assigned to a single round, the sets $K_j$ are pairwise disjoint.}.
We then write the retained outcome as $\vec{y} = (\vec{y}_1, \ldots, \vec{y}_n)$, where $\vec{y}_j$ is the click pattern of the retained detectors assigned to round $j$.

Following our convention for naming POVMs on input and output spaces, we denote the coarse-grained POVMs on the input and output spaces:
\begin{equation}
    \begin{aligned}
        \Gamma_{\vec{x}} &\xrightarrow[\textrm{over }  D]{\textrm{coarse-grain}} \chi_{\vec{y}} \in \Pos(\Hin)\\
        N_{\vec{x}} &\xrightarrow[\textrm{over }D]{\textrm{coarse-grain}} M_{\vec{y}} \in \Pos(\Kout).
    \end{aligned}
\end{equation}
Thus, analogously to \cref{eq:fineInputOutputChannelPOVM}, we have for the coarse-grained POVM elements the relation
\begin{equation} \label{eq:coarseInputOutputChannelPOVM}
    \chi_{\vec{y}} = \Phi^\dag(M_{\vec{y}}) \quad \forall\ \vec{y}\in\{0,1\}^k.
\end{equation}

Finally, we define the detector channel corresponding only to the kept detectors as 
\begin{equation} \label{eq:defOfPhiKeep}
    \Phikeep \coloneqq \Tr_{\K_i:\,i\in D} \circ\ \Phi.
\end{equation}
As discussed below in \cref{def:round-local-retained-channel}, this will be used to define one of the two structural conditions required to show that the coarse-grained measurement POVM has tensor product structure.

\subsection{Structural conditions}
\label{subsec:structural-conditions}

We now isolate the two structural conditions needed to obtain a product POVM after coarse-graining. The first is a statement about the detector channel $\Phi$. The second is a statement about the coarse-grained POVM $\left\{N_{\vec{x}}\right\}_{\vec{x}\in \{0,1\}^t}\subset \Pos(\Kout)$ acting on the output space.

\begin{definition}[Round-local detector channel]
\label{def:round-local-retained-channel}
We say that the detector channel corresponding to the kept detectors is \emph{round-local} if there exist single-round channels from the input space corresponding to round $j$ to the output space corresponding to the detectors assigned to round $j$,
\begin{equation}
    \Phi_j:\mathcal \B(\H_j)\to \mathcal \B(\mathcal K_{K_j})
\end{equation}
such that
\begin{equation}
\label{eq:round-local-channel}
    \Phi_{\mathrm{keep}}
    =
    \bigotimes_{j=1}^n \Phi_j .
\end{equation}
Here, $K_j$ is the set of kept detector indices corresponding to round $j$.
\end{definition}

This is a condition only on the \emph{kept} detector-output modes; the full detector channel need not be round-local. For example, for an optical detection setup the discarded detector modes may depend on optical modes from neighbouring rounds.

\begin{definition}[Round-memoryless detectors]
\label{def:ncrc}
We say that the output POVM is round-memoryless if
\begin{equation}
\label{eq:no-cross-round-detector-correlations}
    M_{\vec{y}}
    =
    \left(
        \bigotimes_{j=1}^n
        S_{\vec{y}_j}^{(j)}
    \right)
    \otimes
    \mathbb{I},
\end{equation}
where $\{M_{\vec{y}}\}_{\vec{y}}$ is the coarse-grained POVM on the entire output space $\Kout$, $\{S_{\vec{y}_j}^{(j)}\}_{\vec{y}_j}\subset \Pos(\K_{K_j})$ is a single-round POVM on the output space, and the identity acts on $\K_{D}$.
\end{definition}

Intuitively, round-memoryless detectors correspond to detectors which have no correlations \emph{across} rounds.
However, these round-memoryless detectors can have arbitrary detector correlations \emph{within} a single round. Such detectors may have, for example, afterpulsing or dead-time effects between detectors in the same round, but not across rounds.
\begin{remark}
    Round-memoryless detectors might be realised in physical setups by having a larger time delay between rounds than within rounds. Alternately, for the purposes of QKD security proofs, this can also be accomplished by discarding rounds following detection events \cite[Chapter 7]{nahar_phd_2025}\cite{tupkary2025phase,wang2025phase}.
\end{remark}

The product structure of the coarse-grained POVM follows immediately from these two conditions, as we now show.
\begin{theorem}
\label{thm:generic-product-povm}
    Suppose that the detector channel $\Phikeep$ corresponding to the kept detectors is round-local (\Cref{def:round-local-retained-channel}), and that the coarse-grained POVM is round-memoryless (\Cref{def:ncrc}). Let $\{\chi_{\vec y}\}_{\vec y\in\{0,1\}^t}$ be the POVM (acting on the input space) obtained by coarse-graining over the discarded detector outcomes.
    Then the coarse-grained POVM factorises across rounds:
    \begin{equation}
    \label{eq:generic-product-povm}
        \chi_{\vec y}
        =
        \bigotimes_{j=1}^n
        \mu_{\vec y_j}^{(j)},
    \end{equation}
    where $\{\mu_{\vec y_j}^{(j)}\}_{\vec y_j}$ is a single-round POVM on $\H_j$. If the same single-round detector channel and the same single-round POVM (on the output space) are used in every round, then the product POVM is also IID.
\end{theorem}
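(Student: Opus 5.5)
The plan is to start from the relation \cref{eq:coarseInputOutputChannelPOVM}, $\chi_{\vec y}=\Phi^\dag(M_{\vec y})$, and substitute the round-memoryless form \cref{eq:no-cross-round-detector-correlations} of $M_{\vec y}$. The key observation is that $M_{\vec y}$ acts as the identity on $\K_D$. The adjoint of a partial trace is tensoring with the identity, so by \cref{eq:defOfPhiKeep} we have $\Phikeep^\dag(A)=\Phi^\dag(A\otimes\mathbb{I}_{\K_D})$ for every $A\in\B(\K_K)$. Checking this is a one-line duality computation:
\[
\Tr[\rho\,\Phi^\dag(A\otimes\mathbb{I})]=\Tr[\Phi(\rho)(A\otimes\mathbb{I})]=\Tr[\Tr_{D}(\Phi(\rho))A]=\Tr[\rho\,\Phikeep^\dag(A)].
\]
Hence
\[
\chi_{\vec y}=\Phikeep^\dag\Bigl(\bigotimes_{j=1}^n S^{(j)}_{\vec y_j}\Bigr).
\]
Here one must be careful to reorder the tensor factors of $\K_K$ according to the partition $K=\bigcup_j K_j$. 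This is harmless, because the $K_j$ are disjoint and the ordering is only a fixed unitary relabelling of the subsystems.

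Next I will invoke round-locality, \cref{eq:round-local-channel}. The adjoint of a tensor product of linear maps is the tensor product of the adjoints, and on product operators $\bigl(\bigotimes_j\Phi_j\bigr)^\dag$ acts factorwise; this again follows from trace duality tested against product inputs $\bigotimes_j\rho_j$, which span $\B(\Hin)$. This gives
\[
\chi_{\vec y}=\bigotimes_{j=1}^n\Phi_j^\dag\bigl(S^{(j)}_{\vec y_j}\bigr),
\]
so I define $\mu^{(j)}_{\vec y_j}\coloneqq\Phi_j^\dag(S^{(j)}_{\vec y_j})$.

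It remains to check that each $\{\mu^{(j)}_{\vec y_j}\}_{\vec y_j}$ is a POVM on $\H_j$. Positivity holds because $\Phi_j$ is a channel, so $\Phi_j^\dag$ is positive. Completeness follows because $\Phi_j$ is trace preserving, so $\Phi_j^\dag$ is unital, and therefore $\sum_{\vec y_j}\mu^{(j)}_{\vec y_j}=\Phi_j^\dag(\mathbb{I}_{\K_{K_j}})=\mathbb{I}_{\H_j}$. Here I use that $\{S^{(j)}_{\vec y_j}\}$ is a POVM, as stated in \cref{def:ncrc}.

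For the IID claim, if $\Phi_j=\Phi_1$ and $S^{(j)}=S^{(1)}$ for all $j$ (under the natural identifications of the spaces $\H_j$ and $\K_{K_j}$), then $\mu^{(j)}=\mu^{(1)}$ immediately.

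The main obstacle is essentially bookkeeping. It consists of justifying the adjoint-of-partial-trace identity and keeping track of the reordering of the output tensor factors so that \cref{eq:round-local-channel} and \cref{eq:no-cross-round-detector-correlations} refer to the same ordering of $\K_K$. I will handle this by fixing the ordering $\K_K\cong\bigotimes_j\K_{K_j}$ once at the start. Neither step requires anything beyond trace duality.
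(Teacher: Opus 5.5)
Your proposal is correct and follows essentially the same chain of equalities as the paper: it substitutes the round-memoryless form into $\chi_{\vec y}=\Phi^\dag(M_{\vec y})$, uses that the adjoint of the partial trace is tensoring with the identity to pass to $\Phikeep^\dag$, and then factorises via round-locality. Your additional checks are details the paper leaves implicit: that each $\{\mu^{(j)}_{\vec y_j}\}$ is a genuine POVM (positivity and unitality of $\Phi_j^\dag$), and the bookkeeping on the ordering of the output subsystems.
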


\begin{proof}

The result follows from the chain of equalities below:
\begin{align}
    \label{eq:input-outputChannelDefUsed} \chi_{\vec y} &= \Phi^\dag(M_{\vec{y}})\\
    \label{eq:roundmemorylessUsed} &= \Phi^\dag\left(
        \bigotimes_{j=1}^n
        S_{\vec{y}_j}^{(j)}
    \otimes
    \mathbb{I}\right)\\
    \label{eq:defOfKeptChannelUsed} &= \Phikeep^\dag\left(
        \bigotimes_{j=1}^n
        S_{\vec{y}_j}^{(j)}\right)\\
    \label{eq:roundlocalUsed} &= \bigotimes_{j=1}^n
        \Phi_j^\dag\left(S_{\vec{y}_j}^{(j)}\right)\\
    &\eqqcolon \bigotimes_{j=1}^n
        \mu_{\vec y_j}^{(j)}.
\end{align}
Here, \cref{eq:input-outputChannelDefUsed} follows from \cref{eq:coarseInputOutputChannelPOVM}, \cref{eq:roundmemorylessUsed} follows from the assumption that the coarse-grained POVM is round memoryless (defined in \cref{eq:no-cross-round-detector-correlations}), \cref{eq:defOfKeptChannelUsed} follows from \cref{eq:defOfPhiKeep} by taking adjoints and using the fact that the adjoint of the partial trace is given by tensoring with the identity, and \cref{eq:roundlocalUsed} follows from the assumption that $\Phikeep$ is round-local (defined in \cref{eq:round-local-channel}).
This gives us the product structure.

It is clear from the above chain of equalities that if the same $\Phi_j$ and the same single-round output POVM $S_{\vec{y}_j}$ are used in every round, then single-round input POVM $\mu_{\vec{y}_j}$ is identical for every $j$, giving IID structure.

\end{proof}

\subsection{Verifying the structural conditions in optical setups}
\label{subsec:verifying-structural-conditions}

While the conditions in \Cref{thm:generic-product-povm} may appear non-trivial at first sight, they are typically straightforward to verify in concrete optical detection setups. We now discuss how these conditions can be verified for some practically-relevant classes of optical detection setups.

\subsubsection{Passive linear optical detection setups}

We first consider passive linear optical detection setups with threshold detectors. Given a set $K$ of detector indices that will be kept, we assign each of the kept detectors to a round $j$: $K = \bigcup_{j=1}^n K_j$.

The round-memoryless detector condition is equivalent to the statement that the threshold detectors have no correlations, such as dead time or afterpulsing, across rounds. Note that there can be dead time and afterpulsing effects within the same round. This can be readily verified by combining a suitable characterisation of the detectors with an appropriate choice of the time gap between rounds.

All that remains is to verify that the detector channel $\Phikeep$ corresponding to the kept detectors is round-local. We establish this using two well-known properties of (lossy) linear optical setups: they map coherent states to coherent states, and coherent states form an overcomplete basis for the space of operators. The evolution of coherent states through a linear optical setup, or equivalently the associated mode transformation, is typically straightforward to determine. If, for every detector $r\in K_j$ that is kept and assigned to round $j$, the corresponding output mode depends only on the input modes of round $j$, then the induced channel factorises across rounds. Indeed, the action of the channel on coherent states is already product, and the conclusion extends to arbitrary operators because coherent states form an overcomplete basis for the operator space.

We summarise this in the following formal statement.
\begin{corollary}[Passive linear optical detection setups] \label{cor:passive-linear-optics}
    Consider a (lossy) passive linear optical detection setup with threshold detectors. Let $K=\bigcup_{j=1}^n K_j$ be the set of kept detectors, where $K_j$ denotes the detectors assigned to round $j$.
    Suppose that:
    \begin{enumerate}
        \item the threshold detectors exhibit no correlations across rounds (e.g. due to dead time or afterpulsing); and
        \item for every detector $r\in K_j$, the corresponding output mode does not depend on the input modes of any other round $i\neq j$.
    \end{enumerate}
    Then the coarse-grained POVM obtained by discarding the outcomes of the detectors not in $K$ factorises across rounds:
    \begin{equation}
        \chi_{\vec y} = \bigotimes_{j=1}^n \mu_{\vec y_j}^{(j)},
    \end{equation}
    where $\{\mu_{\vec y_j}^{(j)}\}_{\vec y_j}$ is a single-round POVM on $\H_j$.
    If, in addition, identical mode transformation and threshold detectors are used in every round, then the resulting POVM is IID.
\end{corollary}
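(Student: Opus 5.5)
The plan is to reduce the corollary to \Cref{thm:generic-product-povm} by checking its two hypotheses. The second hypothesis is immediate. For threshold detectors, the fine-grained output POVM $N_{\vec x}$ is a tensor product over detectors of single-detector click/no-click projectors $\{P_0^{(i)},P_1^{(i)}\}$ on $\K_i$ (the projector onto vacuum and its complement), possibly correlated within a round. Coarse-graining over $D$ sums each discarded factor to $P_0^{(i)}+P_1^{(i)}=\mathbb{I}_{\K_i}$. Assumption~1 says there are no cross-round correlations, so the kept factors group into round blocks $S^{(j)}_{\vec y_j}$ on $\K_{K_j}$. This gives exactly the form \cref{eq:no-cross-round-detector-correlations}, so the coarse-grained POVM is round-memoryless (\Cref{def:ncrc}).

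The main work is to show that $\Phikeep$ is round-local (\Cref{def:round-local-retained-channel}), and I would do it in three steps.

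\textbf{Step 1: action on coherent states.} A lossy passive linear optical network acts on coherent states as $\bigotimes_m\ketbra{\alpha_m}\mapsto\bigotimes_r\ketbra{\beta_r}$, where $\beta_r=\sum_m T_{rm}\alpha_m$ and $T$ is a contraction (loss is modelled by dilating $T$ to a unitary and tracing out the environment modes). Partial trace over the discarded output modes maps coherent states to coherent states on $\K_K$. By assumption~2, for $r\in K_j$ the amplitude $\beta_r$ depends only on the amplitudes $\vec\alpha_j$ of round $j$. Hence
\begin{equation}
    \Phikeep\Big(\bigotimes_{j=1}^n\ketbra{\vec\alpha_j}\Big)=\bigotimes_{j=1}^n\ketbra{\vec\beta_{K_j}(\vec\alpha_j)}.
\end{equation}

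\textbf{Step 2: define the single-round channels.} For each $j$, let $T_j$ be the rows of $T$ indexed by $K_j$, restricted to the round-$j$ input columns. Then $T_j$ is again a contraction, since it is a submatrix of a contraction. Let $\Phi_j$ be the passive linear optical channel it induces from $\H_j$ to $\K_{K_j}$. This is a genuine CPTP map, because $T_j$ can be dilated to a unitary by adding vacuum ancillas and tracing them out. By Step~1, $\Phikeep$ and $\bigotimes_j\Phi_j$ agree on all product coherent states $\ketbra{\vec\alpha}$.

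\textbf{Step 3: extend from coherent states to all operators.} This is where I expect the main subtlety. I would use the fact that the span of $\{\ketbra{\vec\alpha}\}$ is dense in the trace class, for example via the Glauber--Sudarshan representation or the fact that $\ketbra{\vec\alpha}$ determines all $\ket{\vec\alpha}\!\bra{\vec\alpha'}$ by analyticity in $\vec\alpha,\bar{\vec\alpha}'$. Both maps are linear and trace-norm continuous, since channels are contractive in trace norm. So agreement on a dense spanning set implies $\Phikeep=\bigotimes_j\Phi_j$. Care is needed here: the overcompleteness has to be used in this topological sense, not as a finite-dimensional basis argument. The analyticity route, which gives off-diagonal coherent operators and whose span is manifestly dense, is the one I would try first.

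With both conditions verified, \Cref{thm:generic-product-povm} gives $\chi_{\vec y}=\bigotimes_j\mu^{(j)}_{\vec y_j}$ with $\mu^{(j)}_{\vec y_j}=\Phi_j^\dag(S^{(j)}_{\vec y_j})$. If every round uses the same mode transformation $T_j$ and the same detectors, then the $\Phi_j$ and the $S^{(j)}$ are identical across rounds, and the IID clause of that theorem applies.
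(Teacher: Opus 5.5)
Your proposal is correct and follows the paper's route. You read assumption 1 directly as the round-memoryless condition, and you get round-locality of $\Phikeep$ from two facts: lossy linear optics sends product coherent states to product coherent states, and coherent states are overcomplete. \Cref{thm:generic-product-povm} then gives the product and IID structure. Your explicit construction of $\Phi_j$ from the contraction $T_j$ and your trace-norm density argument via analyticity make rigorous what the paper only sketches. As in the paper, the round-memoryless step tacitly requires that the discarded detectors do not influence the kept ones, e.g.\ via dead time.
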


\subsubsection{Active basis choice}

We now extend the above to settings where the linear optical mode transformation in each round depends on a classical random variable, as is the case for detection setups where the measurement basis is chosen based on a classical random variable instead of a passive beam splitter.

Let $K = \bigcup_{j=1}^n K_j$ be the set of kept detectors, and suppose that in round $j$ an independent random variable $\Theta_j$ is drawn from some distribution. We require that, for every detector $r \in K_j$, the corresponding output mode depends neither on the input modes of any other round $i \neq j$, nor on $\Theta_i$ for $i \neq j$. Conditioned on any realisation $\vec{\theta} = (\theta_1, \ldots, \theta_n)$, the conditions of Corollary~\ref{cor:passive-linear-optics} are therefore satisfied, and the POVM factorises as
\begin{equation}
    \chi_{\vec{y}} = \sum_{\vec{\theta}} p(\vec{\theta}) \bigotimes_{j=1}^n \mu_{\vec{y}_j}^{(j)}(\theta_j),
\end{equation}
where $\{\mu_{\vec{y}_j}^{(j)}(\theta_j)\}_{\vec{y}_j}$ is the single-round POVM on $\mathcal{H}_j$ conditional on $\Theta_j = \theta_j$. Since the $\Theta_j$ are independent, the sum over $\vec{\theta}$ factorises, giving
\begin{equation}
    \chi_{\vec{y}} = \bigotimes_{j=1}^n \bar{\mu}_{\vec{y}_j}^{(j)},
\end{equation}
where $\bar{\mu}_{\vec{y}_j}^{(j)} := \sum_{\theta_j}\!p(\theta_j)\mu_{\vec{y}_j}^{(j)}(\theta_j)$ is the $\Theta_j$-averaged single-round POVM element on $\mathcal{H}_j$.

We summarise this in the following formal statement.

\begin{corollary}[Active basis choice] \label{cor:active-basis-choice}
    Consider a (lossy) linear optical detection setup with threshold detectors, where the mode transformation is determined by a sequence of classical random variables $\{\Theta_j\}_{j=1}^n$. Let $K=\bigcup_{j=1}^n K_j$ be the set of kept detectors, where $K_j$ denotes the detectors assigned to round $j$.
    Suppose that:
    \begin{enumerate}
        \item the random variables $\Theta_1, \ldots, \Theta_n$ are mutually independent;
        \item the threshold detectors exhibit no correlations across rounds (e.g.\ due to dead time or afterpulsing); and
        \item for every detector $r\in K_j$, the corresponding output mode does not depend on the input modes of any other round $i\neq j$, nor on $\Theta_i$ for $i \neq j$.
    \end{enumerate}
    Then the coarse-grained POVM obtained by discarding the outcomes of the detectors not in $K$ factorises across rounds:
    \begin{equation}
        \chi_{\vec{y}} = \bigotimes_{j=1}^n \bar{\mu}_{\vec{y}_j}^{(j)},
    \end{equation}
    where $\{\mu_{\vec{y}_j}^{(j)}(\theta_j)\}_{\vec{y}_j}$ is the single-round POVM conditional on $\Theta_j = \theta_j$ guaranteed by Corollary~\ref{cor:passive-linear-optics}, and $\bar{\mu}_{\vec{y}_j}^{(j)} := \sum_{\theta_j} p(\theta_j)\, \mu_{\vec{y}_j}^{(j)}(\theta_j)$ is the $\Theta_j$-averaged single-round POVM on $\mathcal{H}_j$.
    If, in addition, the distribution of $\Theta_j$ and the conditional mode transformation and threshold detectors are identical in every round, then the resulting POVM is IID.
\end{corollary}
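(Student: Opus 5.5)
The plan is to reduce Corollary~\ref{cor:active-basis-choice} to Corollary~\ref{cor:passive-linear-optics} by conditioning on the classical random variables, and then to average. Fix a realisation $\vec\theta=(\theta_1,\ldots,\theta_n)$. Conditioned on $\vec\theta$, the setup is a fixed (lossy) passive linear optical setup with threshold detectors. By assumption 3, every kept output mode $r\in K_j$ depends only on the input modes of round $j$ and on $\theta_j$. By assumption 2, the detectors have no cross-round correlations. So the hypotheses of Corollary~\ref{cor:passive-linear-optics} hold, and the conditional coarse-grained POVM is
\begin{equation*}
\chi_{\vec y}(\vec\theta)=\bigotimes_{j=1}^n \mu^{(j)}_{\vec y_j}(\theta_j).
\end{equation*}
Here the single-round element depends on $\vec\theta$ only through $\theta_j$. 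This holds because the round-local channel $\Phi_j$ from Definition~\ref{def:round-local-retained-channel} is built only from the mode transformation of the round-$j$ kept modes, which involves $\theta_j$ alone.

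The point to check carefully is that the single-round channel $\Phi_j$ depends on $\theta_j$ only. The output POVM $S^{(j)}$ should not depend on $\theta_i$ for $i\neq j$ either; taking assumption 2 to cover possible $\Theta$-dependence of the detectors ensures this. This is the step I expect to need the most care. I would argue it by noting that $\Phikeep(\vec\theta)$ acts on coherent states as a product whose $j$th factor is fixed by the round-$j$ mode map. Overcompleteness of coherent states then fixes each factor as a channel determined by $\theta_j$, and Theorem~\ref{thm:generic-product-povm} does the rest.

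Since Bob does not record $\vec\theta$ as part of the coarse-grained outcome $\vec y$, the effective POVM element is the mixture $\chi_{\vec y}=\sum_{\vec\theta}p(\vec\theta)\,\chi_{\vec y}(\vec\theta)$. By assumption 1, $p(\vec\theta)=\prod_j p(\theta_j)$. Multilinearity of the tensor product then lets the sum factorise:
\begin{equation*}
\chi_{\vec y}=\sum_{\theta_1,\ldots,\theta_n}\prod_{j}p(\theta_j)\bigotimes_{j}\mu^{(j)}_{\vec y_j}(\theta_j)=\bigotimes_{j=1}^n\sum_{\theta_j}p(\theta_j)\mu^{(j)}_{\vec y_j}(\theta_j)=\bigotimes_{j=1}^n\bar\mu^{(j)}_{\vec y_j}.
\end{equation*}
Each $\{\bar\mu^{(j)}_{\vec y_j}\}_{\vec y_j}$ is a valid POVM on $\H_j$. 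It is a convex combination of POVMs, so it is positive and $\sum_{\vec y_j}\bar\mu^{(j)}_{\vec y_j}=\sum_{\theta_j}p(\theta_j)\mathbb{I}=\mathbb{I}$. If the setting were instead that $\theta$ is revealed and kept, the same factorisation applies with outcomes $(\vec y_j,\theta_j)$ and elements $p(\theta_j)\mu^{(j)}_{\vec y_j}(\theta_j)$; I would remark on this as well.

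Finally, for the IID claim, suppose the distribution of $\Theta_j$, the conditional mode transformation, and the detectors are the same in every round. Then the IID part of Corollary~\ref{cor:passive-linear-optics} gives $\mu^{(j)}_{\vec y_j}(\theta)$ independent of $j$ for each fixed $\theta$. Averaging against the common distribution then gives $\bar\mu^{(j)}$ independent of $j$.
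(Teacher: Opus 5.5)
Your proposal is correct and follows essentially the same route as the paper: you condition on a realisation $\vec\theta$, invoke \Cref{cor:passive-linear-optics} to obtain $\bigotimes_j \mu^{(j)}_{\vec y_j}(\theta_j)$, and then use independence, $p(\vec\theta)=\prod_j p(\theta_j)$, to factorise the average into $\bigotimes_j \bar\mu^{(j)}_{\vec y_j}$. Your extra checks are details the paper leaves implicit rather than a different argument: that the $j$th factor depends only on $\theta_j$, that $\bar\mu^{(j)}$ is a valid POVM, and the variant where $\theta$ is kept.
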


\section{Applications}
\label{sec:Applications}

We now make concrete the application of our results by applying them to common optical detection setups. For the purpose of these examples, we assume that the threshold detectors have no dead times or afterpulsing. 

\subsection{The Mach-Zehnder interferometer}
\label{subsec:MZI}

Since the standard Mach–Zehnder interferometer (MZI) is a simple but important building block in many time-bin encoded QKD protocols, we first analyse its action on a train of time-bin pulses. In a MZI with a delay equal to the time-bin separation, interference occurs between adjacent time-bins. Denoting the time-bins in round $j$ by the early mode $e_j$ and the late mode $l_j$, the MZI creates interference terms such as $l_j+e_{j+1}$, which straddle the round boundary.

\begin{figure}[!ht]
    \centering
    \includegraphics[width=\linewidth]{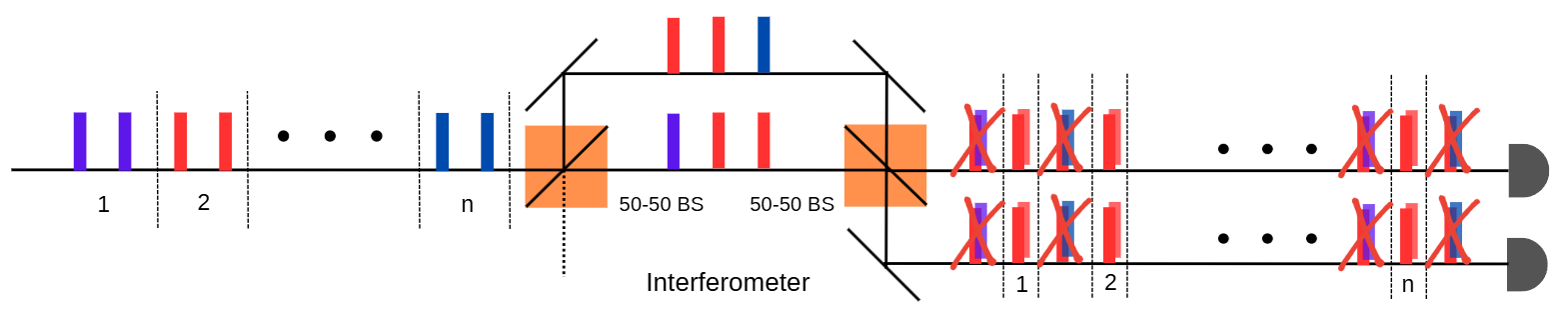}
    \caption{Mach-Zehnder interferometer with inter-round detection events ignored in post-processing. The retained ``middle'' detections interfere $e_j$ and $l_j$ within the same protocol round. The crossed-out ``outer'' detections involve interference across round boundaries, for instance between $l_j$ and $e_{j+1}$, and are coarse-grained over. After this coarse-graining, each round has $2^2=4$ retained click patterns.}
    \label{fig:IgnoreClicks}
\end{figure}

As shown in \Cref{fig:IgnoreClicks}, the detectors corresponding to the middle time-bins are kept, which correspond to the intra-round interferences, while those for the outer time-bins, which correspond to the inter-round interferences, are coarse-grained over. Each of the kept output modes thus depend on the input modes for a single round. Thus, the setup satisfies the conditions of \cref{cor:passive-linear-optics} so that we can conclude that the final POVM is product. Moreover, if the threshold detectors have no time dependent characteristics, then the POVM is IID.

\subsection{Three-state protocol detection setup}
\label{subsec:cow}

A very similar argument applies to the detection setup of the time-bin encoded three-state protocol \cite{boaron_secure_2018} --- sometimes referred to as an IID variant of the coherent one way (COW) protocol \cite{stucki_fast_2005}.

The detection setup, depicted in \cref{fig:SetupWeUse}, uses a $t/(1-t)$ beam splitter to make a passive basis choice. One output of the beam splitter leads to a threshold detector that measures the signal's time of arrival; the other, the monitoring line, uses a Mach-Zehnder interferometer to measure interference between pulses.

\begin{figure}[!ht]
    \centering
    \includegraphics[width=\linewidth]{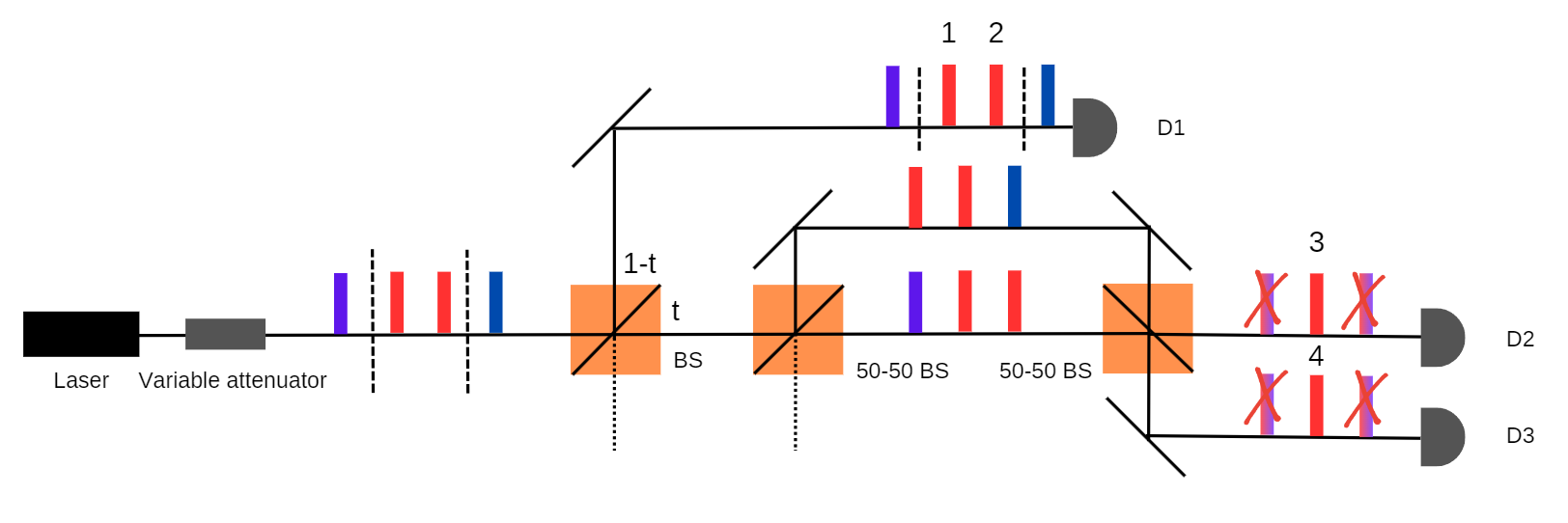}
    \caption{Optical setup of the three-state/COW-style receiver. Modes 1 and 2 measure time of arrival and are retained round-local modes. Modes 3 and 4 are retained monitoring-line modes whose amplitudes depend only on the two time-bins in the same protocol round. The unlabelled monitoring modes correspond to inter-round interference and are coarse-grained over.}
    \label{fig:SetupWeUse}
\end{figure}

Modes 1 and 2 in \cref{fig:SetupWeUse} measure time of arrival within the round. Modes 3 and 4 are retained monitoring-line modes whose output modes depend only on the two time-bin modes in the same round. The unlabelled monitoring modes correspond to inter-round interferences and are coarse-grained over. The kept part of the linear-optical mode transformation is thus round-local, and \cref{cor:passive-linear-optics} therefore implies that the resulting coarse-grained POVM has a tensor-product structure.  The same reasoning applies to higher-dimensional variants with more time-bins such as in Ref.~\cite{sulimany_highcow_2025}.

If the basis choice is implemented actively rather than by the passive beam splitter shown in \Cref{fig:SetupWeUse}, the same conclusion follows from \Cref{cor:active-basis-choice}, provided the active setting is sampled independently in each round.

\subsubsection{QKD key rate analysis}

After enforcing product structure --- and IID structure when the same single-round POVM is used in every round --- one can apply standard security-proof techniques. For example, Ref.~\cite{lavie_improved_2022} uses the postselection technique \cite{christandl_post-selection_2009, nahar_postselection_2024} to reduce the analysis to proving security against IID attacks, and then numerically computes the error rate via a semidefinite program (SDP) \cite[Section III]{primaatmaja2019versatile}.

Our work provides a ``free'' improvement to this structural reduction by removing the requirement of an extra vacuum pulse in Ref.~\cite{lavie_improved_2022}: the product and IID structure can be recovered purely by classical post-processing. An alternative that still uses the postselection technique would be to use all the single-round data (as opposed to just the phase error rate) for the IID key rates \cite{winick_reliable_2018,George_numfinite_2021,tupkary_security_2024,nahar_postselection_2024}. This would give better key rates at the cost of more computation. Entropy-accumulation-based techniques \cite{arqand_marginal-constrained_2025,tupkary2026rigorous} can also be straightforwardly applied to obtain even better finite-size performance.

In this work we focus on the product-structure reduction needed to apply standard security-proof approaches to time-bin encoded QKD protocols, and we therefore do not compute key rates explicitly.

\section{Conclusion}
\label{sec:conclusion}

In this work, we identified structural conditions under which coarse-graining of detector outcomes recovers a tensor-product POVM for time-bin encoded QKD protocols: the detector channel on the kept detectors must be round-local, and the coarse-grained detector POVM (on the output space) must contain no correlations across kept rounds.

Physically, for linear optical detection setups the first condition on the detector channel corresponds to the mode transformation for the detector assigned to round $j$ not involving any input modes from any round $i\neq j$. The second condition on the output detector POVM can be physically interpreted as the detectors having dead times and afterpulsing effects that do not affect detectors in following rounds. Note that this condition does not require detectors to have no dead times or afterpulsing --- just that the correlation time is short enough that it is entirely contained in a single round. In terms of protocol design, this can be accomplished by having a larger time delay between rounds than within rounds. Alternately, for the purposes of QKD security proofs, this can also be accomplished by discarding rounds following detection events \cite{nahar_phd_2025, tupkary2025phase,wang2025phase}.

Applied to common time-bin encoded receivers such as the Mach-Zehnder interferometer and the three-state detection setup, our framework removes the need for the additional vacuum pulse introduced in previous analyses \cite{sulimany_highcow_2025,lavie_improved_2022} solely to recover product structure. Consequently, no assumption is required that Eve forwards the vacuum pulse unchanged.
frameworks.

\begin{acknowledgments}
We thank Devashish Tupkary for valuable feedback on drafts of this paper.
Most of the work was performed at the Institute for Quantum Computing, at the University of Waterloo, which is supported by Innovation, Science, and Economic Development Canada. The research has been supported by NSERC under the Discovery Grants Program, Grant No. 341495. This research has been supported by Alliance QUINT.
\end{acknowledgments}

\bibliographystyle{unsrt}
\bibliography{referencesTemp}

@article{tupkary_security_2024,
	title = {Security proof for variable-length quantum key distribution},
	volume = {6},
	url = {https://link.aps.org/doi/10.1103/PhysRevResearch.6.023002},
	doi = {10.1103/PhysRevResearch.6.023002},
	number = {2},
	urldate = {2024-04-30},
	journal = {Physical Review Research},
	author = {Tupkary, Devashish and Tan, Ernest Y.-Z. and Lütkenhaus, Norbert},
	month = apr,
	year = {2024},
	note = {Publisher: American Physical Society},
	pages = {023002},
}

@misc{nahar_postselection_2024,
	title = {Postselection technique for optical {Quantum} {Key} {Distribution} with improved de {Finetti} reductions},
	url = {http://arxiv.org/abs/2403.11851},
	doi = {10.48550/arXiv.2403.11851},
	urldate = {2024-04-02},
	publisher = {arXiv},
	author = {Nahar, Shlok and Tupkary, Devashish and Zhao, Yuming and Lütkenhaus, Norbert and Tan, Ernest},
	month = mar,
	year = {2024},
	note = {arXiv:2403.11851 [math-ph, physics:physics, physics:quant-ph]},
}

@article{winick_reliable_2018,
	title = {Reliable numerical key rates for quantum key distribution},
	volume = {2},
	issn = {2521-327X},
	url = {http://arxiv.org/abs/1710.05511},
	doi = {10.22331/q-2018-07-26-77},
	language = {en},
	urldate = {2023-11-15},
	journal = {Quantum},
	author = {Winick, Adam and Lütkenhaus, Norbert and Coles, Patrick J.},
	month = jul,
	year = {2018},
	note = {arXiv:1710.05511 [quant-ph]},
	pages = {77},
}

@article{inoue_differential_2002,
	title = {Differential {Phase} {Shift} {Quantum} {Key} {Distribution}},
	volume = {89},
	url = {https://link.aps.org/doi/10.1103/PhysRevLett.89.037902},
	doi = {10.1103/PhysRevLett.89.037902},
	number = {3},
	urldate = {2024-01-20},
	journal = {Physical Review Letters},
	author = {Inoue, Kyo and Waks, Edo and Yamamoto, Yoshihisa},
	month = jun,
	year = {2002},
	note = {Publisher: American Physical Society},
	pages = {037902},
}

@article{stucki_fast_2005,
	title = {Fast and simple one-way quantum key distribution},
	volume = {87},
	issn = {0003-6951},
	url = {https://doi.org/10.1063/1.2126792},
	doi = {10.1063/1.2126792},
	number = {19},
	urldate = {2024-01-20},
	journal = {Applied Physics Letters},
	author = {Stucki, Damien and Brunner, Nicolas and Gisin, Nicolas and Scarani, Valerio and Zbinden, Hugo},
	month = nov,
	year = {2005},
	pages = {194108},
}

@article{boaron_secure_2018,
	title = {Secure {Quantum} {Key} {Distribution} over 421 km of {Optical} {Fiber}},
	volume = {121},
	url = {https://link.aps.org/doi/10.1103/PhysRevLett.121.190502},
	doi = {10.1103/PhysRevLett.121.190502},
	number = {19},
	urldate = {2024-01-20},
	journal = {Physical Review Letters},
	author = {Boaron, Alberto and Boso, Gianluca and Rusca, Davide and Vulliez, Cédric and Autebert, Claire and Caloz, Misael and Perrenoud, Matthieu and Gras, Gaëtan and Bussières, Félix and Li, Ming-Jun and Nolan, Daniel and Martin, Anthony and Zbinden, Hugo},
	month = nov,
	year = {2018},
	note = {Publisher: American Physical Society},
	pages = {190502},
}

@article{lavie_improved_2022,
	title = {Improved {Coherent} {One}-{Way} {Quantum} key {Distribution} for {High}-{Loss} {Channels}},
	volume = {18},
	issn = {2331-7019},
	url = {https://link.aps.org/doi/10.1103/PhysRevApplied.18.064053},
	doi = {10.1103/PhysRevApplied.18.064053},
	language = {en},
	number = {6},
	urldate = {2023-06-24},
	journal = {Physical Review Applied},
	author = {Lavie, Emilien and Lim, Charles C.-W.},
	month = dec,
	year = {2022},
	pages = {064053},
}

@article{christandl_post-selection_2009,
	title = {Post-selection technique for quantum channels with applications to quantum cryptography},
	volume = {102},
	issn = {0031-9007, 1079-7114},
	url = {http://arxiv.org/abs/0809.3019},
	doi = {10.1103/PhysRevLett.102.020504},
	number = {2},
	urldate = {2023-03-20},
	journal = {Physical Review Letters},
	author = {Christandl, Matthias and Koenig, Robert and Renner, Renato},
	month = jan,
	year = {2009},
	note = {arXiv:0809.3019 [quant-ph]},
	pages = {020504},
}

@article{sulimany_highcow_2025,
author={Sulimany, Kfir
and Pelc, Guy
and Dudkiewicz, Rom
and Korenblit, Simcha
and Eisenberg, Hagai S.
and Bromberg, Yaron
and Ben-Or, Michael},
title={High-dimensional coherent one-way quantum key distribution},
journal={npj Quantum Information},
year={2025},
month={Jan},
day={29},
volume={11},
number={1},
pages={16},
issn={2056-6387},
doi={10.1038/s41534-025-00965-7},
url={https://doi.org/10.1038/s41534-025-00965-7}
}

@article{George_numfinite_2021,
  title = {Numerical calculations of the finite key rate for general quantum key distribution protocols},
  author = {George, Ian and Lin, Jie and L\"utkenhaus, Norbert},
  journal = {Phys. Rev. Res.},
  volume = {3},
  issue = {1},
  pages = {013274},
  numpages = {25},
  year = {2021},
  month = {Mar},
  publisher = {American Physical Society},
  doi = {10.1103/PhysRevResearch.3.013274},
  url = {https://link.aps.org/doi/10.1103/PhysRevResearch.3.013274}
}

@article{sandfuchs2025security,
  title={Security of differential phase shift QKD from relativistic principles},
  author={Sandfuchs, Martin and Haberland, Marcus and Vilasini, Venkatesh and Wolf, Ramona},
  journal={Quantum},
  volume={9},
  pages={1611},
  year={2025},
  publisher={Verein zur F{\"o}rderung des Open Access Publizierens in den Quantenwissenschaften},
  url={https://quantum-journal.org/papers/q-2025-01-27-1611/}
}

@article{mosca2018cybersecurity,
  title={Cybersecurity in an era with quantum computers: will we be ready?},
  author={Mosca, Michele},
  journal={IEEE Security \& Privacy},
  volume={16},
  number={5},
  pages={38--41},
  year={2018},
  publisher={IEEE}
}

@misc{arqand_marginal-constrained_2025,
    title = {Marginal-constrained entropy accumulation theorem},
    url = {http://arxiv.org/abs/2502.02563},
    doi = {10.48550/arXiv.2502.02563},
    urldate = {2025-07-09},
    publisher = {arXiv},
    author = {Arqand, Amir and Tan, Ernest Y.-Z.},
    month = apr,
    year = {2025},
    note = {arXiv:2502.02563 [quant-ph]},
}

@article{dupuis_entropy_2020,
    title = {Entropy {Accumulation}},
    volume = {379},
    issn = {1432-0916},
    url = {https://doi.org/10.1007/s00220-020-03839-5},
    doi = {10.1007/s00220-020-03839-5},
    language = {en},
    number = {3},
    urldate = {2024-07-19},
    journal = {Communications in Mathematical Physics},
    author = {Dupuis, Frédéric and Fawzi, Omar and Renner, Renato},
    month = nov,
    year = {2020},
    pages = {867--913},
}

@article{metger_security_2023,
    title = {Security of quantum key distribution from generalised entropy accumulation},
    volume = {14},
    copyright = {2023 The Author(s)},
    issn = {2041-1723},
    url = {https://www.nature.com/articles/s41467-023-40920-8},
    doi = {10.1038/s41467-023-40920-8},
    language = {en},
    number = {1},
    urldate = {2024-10-05},
    journal = {Nature Communications},
    author = {Metger, Tony and Renner, Renato},
    month = aug,
    year = {2023},
    note = {Publisher: Nature Publishing Group},
    pages = {5272},
}

@article{primaatmaja2019versatile,
  title={Versatile security analysis of measurement-device-independent quantum key distribution},
  author={Primaatmaja, Ignatius William and Lavie, Emilien and Goh, Koon Tong and Wang, Chao and Lim, Charles Ci Wen},
  journal={Physical Review A},
  volume={99},
  number={6},
  pages={062332},
  year={2019},
  publisher={APS},
  url={https://journals.aps.org/pra/abstract/10.1103/PhysRevA.99.062332}
}

@article{wang2025phase,
  title={Phase error estimation for passive detection setups with imperfections and memory effects},
  author={Wang, Zhiyao and Tupkary, Devashish and Nahar, Shlok},
  journal={arXiv preprint arXiv:2508.21486},
  year={2025},
  url={https://arxiv.org/abs/2508.21486}
}

@article{tupkary2025phase,
  title={Phase error rate estimation in QKD with imperfect detectors},
  author={Tupkary, Devashish and Nahar, Shlok and Sinha, Pulkit and L{\"u}tkenhaus, Norbert},
  journal={Quantum},
  volume={9},
  pages={1937},
  year={2025},
  publisher={Verein zur F{\"o}rderung des Open Access Publizierens in den Quantenwissenschaften},
  url={https://quantum-journal.org/papers/q-2025-12-11-1937/}
}

@phdthesis{nahar_phd_2025,
  author       = {Nahar, Shlok},
  title        = {A proof-technique-independent framework for detector imperfections in QKD},
  school       = {University of Waterloo},
  year         = {2026},
  url = {https://uwspace.uwaterloo.ca/items/ff6bda01-eefc-4bd7-bc1c-6207afd36142}
}

@article{koashi_simple_2005,
  title={Simple security proof of quantum key distribution via uncertainty principle},
  author={Koashi, Masato},
  journal={arXiv preprint quant-ph/0505108},
  year={2005},
  url={https://arxiv.org/abs/quant-ph/0505108}
}

@article{koashi_simple_2009,
  title = {Simple Security Proof of Quantum Key Distribution Based on Complementarity},
  author = {Koashi, M.},
  year = {2009},
  month = apr,
  journal = {New Journal of Physics},
  volume = {11},
  number = {4},
  pages = {045018},
  issn = {1367-2630},
  doi = {10.1088/1367-2630/11/4/045018},
  urldate = {2024-11-25},
  langid = {english},
  language = {en}
}

@article{tupkary2025qkd,
  title={QKD security proofs for decoy-state {BB84}: protocol variations, proof techniques, gaps and limitations},
  author={Tupkary, Devashish and Tan, Ernest Y-Z and Nahar, Shlok and Kamin, Lars and L{\"u}tkenhaus, Norbert},
  journal={arXiv preprint arXiv:2502.10340},
  year={2025},
  url={https://arxiv.org/abs/2502.10340}
}

@article{tupkary2026rigorous,
  title={A rigorous and complete security proof of decoy-state BB84 quantum key distribution},
  author={Tupkary, Devashish and Nahar, Shlok and Arqand, Amir and Tan, Ernest Y-Z and L{\"u}tkenhaus, Norbert},
  journal={arXiv preprint arXiv:2601.18035},
  year={2026},
  url={https://arxiv.org/abs/2601.18035}
}

@article{tomamichel_uncertainty_2011,
  title = {Uncertainty {{Relation}} for {{Smooth Entropies}}},
  author = {Tomamichel, Marco and Renner, Renato},
  year = {2011},
  month = mar,
  journal = {Physical Review Letters},
  volume = {106},
  number = {11},
  pages = {110506},
  issn = {0031-9007, 1079-7114},
  doi = {10.1103/PhysRevLett.106.110506},
  urldate = {2023-09-25},
  langid = {english},
  language = {en}
}

\end{document}